\DeclareMathOperator*{\argmax}{argmax}
\newcommand{\nedge}{k}
\newcommand{\hide}[1]{}
\newcommand{\xhdr}[1]{\vspace{1.7mm}\noindent{{\bf #1.}}}
\newtheorem{theorem}{Theorem}
\newcommand{\netrate}{{\textsc{Net\-Rate}}\xspace}
\newcommand{\maxinf}{{\textsc{Influ\-Max}}\xspace}
\newcommand{\eg}{\emph{e.g.}}
\newcommand{\ie}{\emph{i.e.}}
\begin{document}

\icmltitlerunning{Influence Maximization in Continuous Time Diffusion Networks}

\twocolumn[
\icmltitle{Influence Maximization in Continuous Time Diffusion Networks}

\icmlauthor{Manuel Gomez-Rodriguez$^{1,2}$}{manuelgr@stanford.edu}

\icmlauthor{Bernhard Sch\"{o}lkopf$^1$}{bs@tuebingen.mpg.de}

\icmladdress{$^1$MPI for Intelligent Systems and $^2$Stanford University}

\icmlkeywords{influence maximization, diffusion networks, social networks, temporal dynamics}

\vskip 0.3in
]

\begin{abstract}
The problem of finding the optimal set of source nodes in a diffusion network that maximizes the spread of 
information, influence, and diseases in a li\-mi\-ted amount of time depends dramatically on the underlying temporal 
dynamics of the network. However, this still remains largely unexplored to date.
To this end, given a network and its temporal dy\-na\-mics, we first des\-cribe how continuous time Markov chains \-allow us to analytically 
compute the average total number of nodes reached by a diffusion process star\-ting in a set of source nodes.
We then show that selecting the set of most influential source nodes in the con\-ti\-nuous time in\-fluence maximization problem is 
NP-hard and develop an effi\-cient \emph{approxi\-mation algorithm} with pro\-va\-ble near-optimal performance.
Experiments on synthetic and real diffusion networks show that our algorithm outperforms other state of the art algorithms 
by at least $\sim$$20$\% and is robust across different network topologies.

\end{abstract}

\section{Introduction}
\label{sec:intro}
In recent years, there has been an increasing effort in uncovering, understanding, and controlling diffusion and propagation processes in 
a broad range of domains: information propagation~\cite{leskovec2007cost}, social networks~\cite{kempe03maximizing}, viral 
marketing~\cite{richardson2002mining}, and epidemiology~\cite{wallinga04epidemic}.
%
Diffusion networks have raised many research pro\-blems, ranging from network inference~\cite{manuel10netinf, manuel11icml} to influence 
spread maximization~\cite{kempe03maximizing}. In this article, we pay attention to the \-latter problem, and we propose a method for 
continuous time influence maximization that accounts for the temporal dynamics of diffusion networks.

Influence spread maximization tackles the problem of selecting the most influential source node set of a given size in a diffusion network. A diffusion
process that starts in such an influential set of nodes is expected to reach the greatest number of nodes in the network. In information propagation, the 
problem reduces to choosing the set of blogs and news media sites that together are expected to spread a piece of news to the greatest number 
of sites. In viral marketing, it consists of identifying the most in\-fluential set of \emph{trendsetters} that together may influence the \-greatest number of
customers. Finally, in epidemiology, the in\-fluence maximization problem reduces to finding the set of individuals that together are most likely to spread
an illness or virus to the greatest percentage of the population. In this latter case, the solution of the influence maximization problem helps towards developing 
vaccination and quarantine policies.

In our work, we build on the fully continuous time model of di\-ffusion recently introduced by~\citet{manuel11icml}. This model accounts for temporally he\-te\-ro\-geneous 
in\-te\-rac\-tions wi\-thin a diffusion network -- it allows\- information (or influence) to spread at different rates across different edges, as shown in real-world examples. 
We first describe how, given a set of source nodes, we can compute the average total number of infected nodes analytically using the work of~\citet{kulkarni1986shortest}. The 
key observation is that the infection time of a node in a network with stochastic edge lengths is the length of the stochastic shortest path from the source nodes to the node. Later, 
we show that finding the optimal influential set of source nodes in the continuous time influence maximization problem is a NP-hard problem. We then provide an \emph{approximation 
al\-go\-rithm} that finds a suboptimal set of source nodes with \emph{provable guarantees} in terms of the average total number of infected nodes.

\xhdr{Related work} \citet{richardson2002mining} were the first to study influence
maximization as an algorithmic pro\-blem, motivated by marketing applications. In their work, they proposed heuristics for choosing 
a set of influential customers with a large overall effect on a network, and methods to infer the influence of each customer were developed.
\citet{kempe03maximizing} posed influence maximization in a social network as a discrete optimization 
pro\-blem. They showed that the optimal solution is NP-hard for se\-ve\-ral models of influence, and obtained the first provable approximation 
guarantees for efficient algorithms based on a natural diminishing property of the problem, submodularity.
Since then there have been substantial de\-ve\-lop\-ments that build on their seminal work. Efficient influence ma\-xi\-mi\-zation that uses heuristics to speed up 
the optimization problem has been proposed~\cite{chen2009efficient, chen2010scalable} and influence maximization has been studied on the context of competing 
cascades~\cite{bharathi2007competitive} or under additional constraints~\cite{goyal2010approximation}.

However, to the best of our knowledge, previous work on influence maximization has ignored the underlying temporal dynamics governing 
diffusion networks -- once a transmission occurs, it always occurs at the same rate or temporal scale. In contrast, we consider heterogeneous 
pairwise trans\-mi\-ssion rates, found in many real-world examples. In information propagation, news media sites and pro\-fe\-ssional bloggers 
typically report news faster than people that maintain personal blogs. In epidemiology, people meet each other with different frequencies and then 
the pairwise transmission rates between individuals within a population differ. Finally, in viral marketing, some customers make up their minds 
about a product or service quicker than {others}, and then pass recommendations on at different rates.

The main contribution of our work is twofold. First, it considers a novel continuous time formulation of the influence maximization problem in which 
information or influence can spread at different rates across different edges, as in real-world examples. Second, this continuous time approach allows 
us to analytically compute and efficiently optimize the influence (\ie, average total number of infections), avoiding the use of heuristics~\cite{chen2010scalable, chen2009efficient} 
or Monte Carlo simulations~\cite{kempe03maximizing}.


\section{Problem formulation}
\label{sec:formulation}
In this section, we build on the fully continuous time model of di\-ffu\-sion recently proposed by~\citet{manuel11icml}. We start by describing how 
the diffusion model accounts for pairwise interactions and then continue discussing some basic assumptions about di\-ffu\-sion processes. We conclude 
with a statement of the continuous time in\-fluence maximization problem.

\xhdr{Pairwise transmission likelihood}
In a diffusion network, we first need to model the pairwise in\-te\-rac\-tions between nodes. We pay attention to the general case in
which di\-ffe\-rent pairwise interactions between nodes in the network occur at different rates. Define $f(t_j | t_i ; \alpha_{i,j})$ as the 
conditional likelihood of transmission between a node $i$ and a node $j$, where $t_i$ and $t_j$ are infection times and $\alpha_{i,j}$
is the transmission rate. We assume that the likelihood depends on the pairwise transmission rate $\alpha_{i,j}$ and the time di\-fference 
$(t_j - t_i)$ (\ie, it is time shift invariant). Moreover, a node cannot be infected by a node infected later in time (\ie, $t_j > t_i$) 
and as $\alpha_{i,j} \rightarrow 0$, the expected trans\-mission time becomes arbitrarily long. 
\begin{figure*}[!t]
\centering
  \subfigure[$t_1$: $|I| = 2$, $|U_n| = 2$, $|X_n| = 4$]{\makebox[0.28\textwidth][c]{\includegraphics[width=0.22\textwidth]{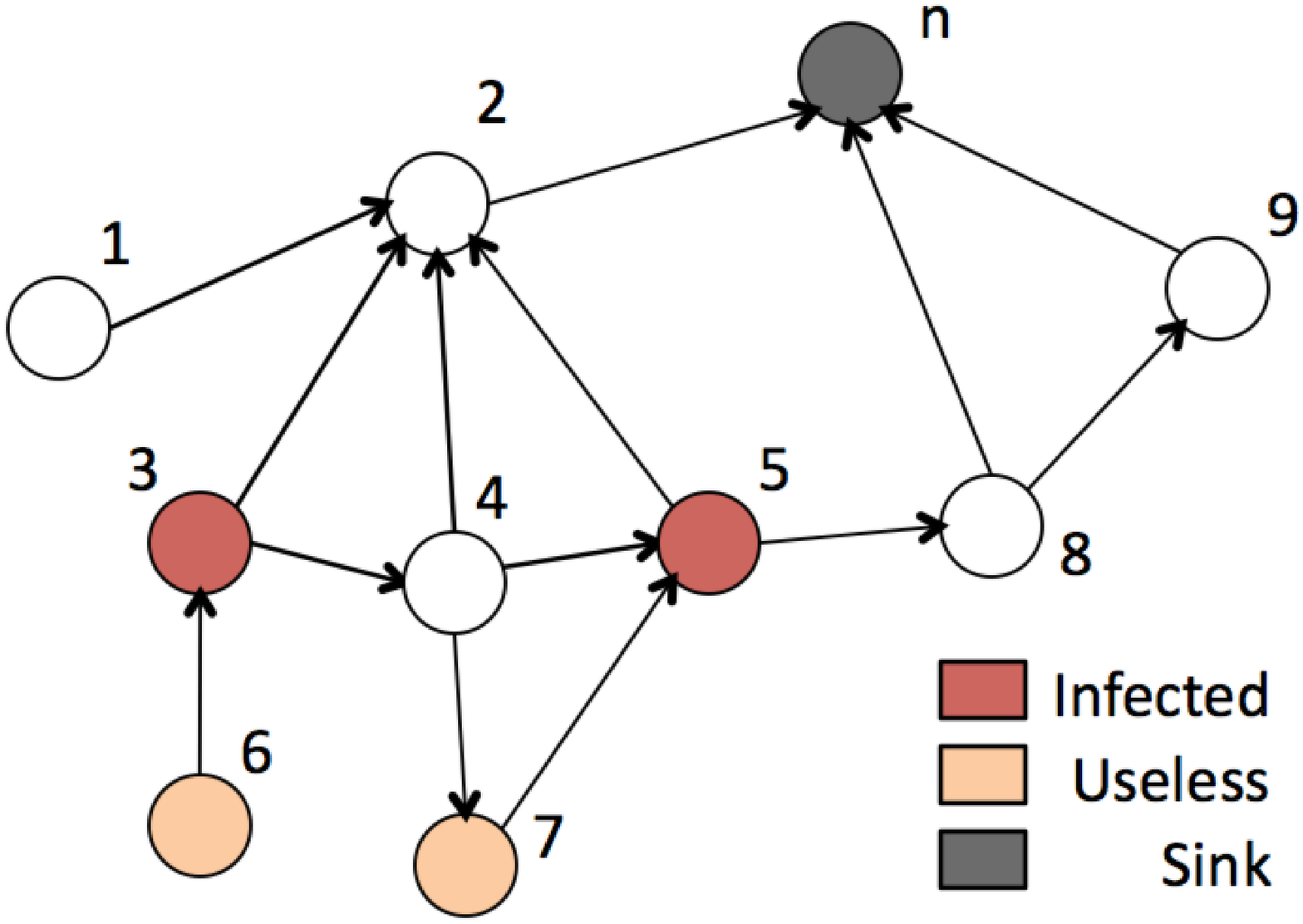} \label{fig:explanation-t0}}} \hspace{12mm}
  \subfigure[$t_2$: $|I| = 3$, $|U_n| = 4$, $|X_n| = 7$]{\makebox[0.28\textwidth][c]{\includegraphics[width=0.22\textwidth]{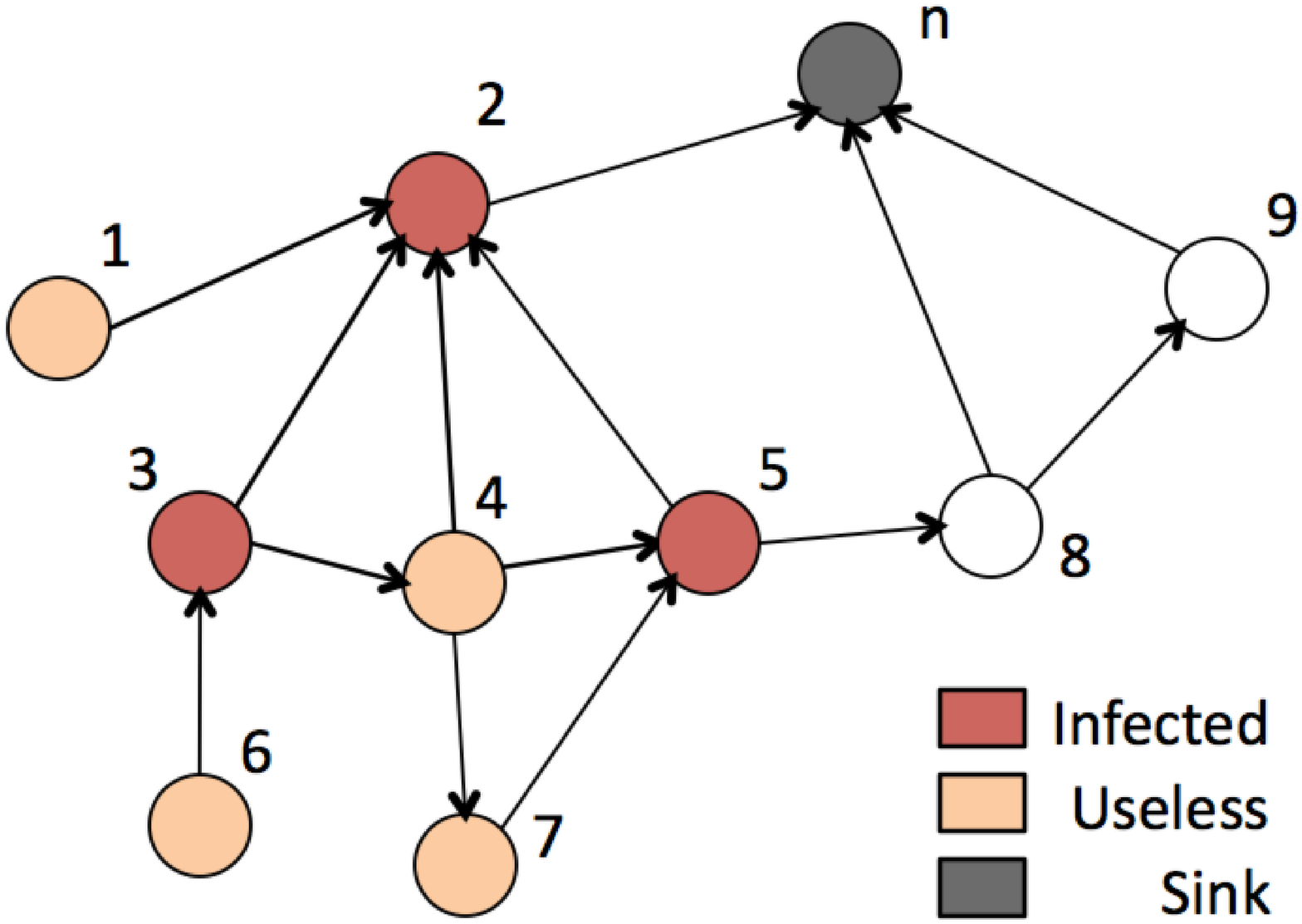} \label{fig:explanation-t1}}} \hspace{12mm}
  \subfigure[$X \in \Omega^*_n : A \subseteq X$]{\makebox[0.17\textwidth][c]{\includegraphics[width=0.11\textwidth]{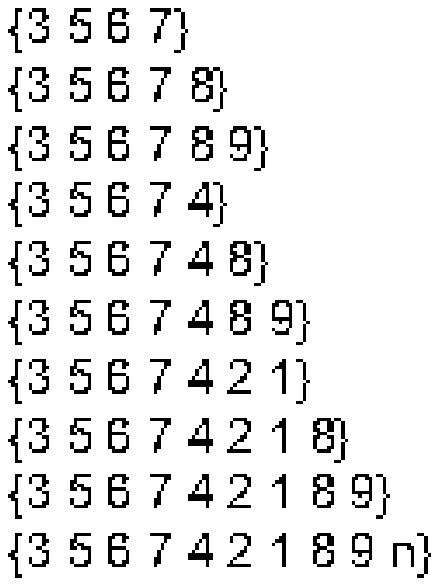} \label{fig:explanation-omega}}}
  	\caption{
	Panels (a,b): Sets of infected nodes ($I$; in red) and useless nodes ($U_n$; in orange) at two different times for a diffusion process that starts 
	in the source node set $A = \{3, 5\}$ relative to a particular sink node ($n$; in black) . Any path from a useless node to the sink node is 
	\emph{blocked} by an infected node. The set of disabled ($X_n$) nodes is simply the union of the sets of infected and useless nodes. Panel 
	(c): Sets of disabled nodes $X \in \Omega^*_n$ such that $A \subseteq X$. They represent the states that we need to describe the temporal evolution
	of a diffusion process towards the sink node $n$ that starts in the set of sources $A$.
	}
	\label{fig:explanation}
\end{figure*}

In the remainder of the paper, we consider the exponential distribution $f(t_j | t_i ; \alpha_{i,j}) \propto e^{-\alpha_{i,j} (t_j - t_i)}$ to model 
pairwise interactions for the sake of simplicity. The exponential model is a well-known parametric model for mo\-de\-ling diffusion 
and influence in social and information networks~\cite{manuel10netinf}. However, our results can easily be extended to 
diffusion networks with phase-type pairwise transmission likelihoods. This is important since the set of phase-type distributions is dense in 
the field of all positive-valued distributions and it can used to approximate power-laws, which have been also used for modeling diffusions in 
social networks~\cite{multitree12icml}, Rayleigh distributions, which have been used in epi\-de\-mio\-lo\-gy~\cite{wallinga04epidemic}, and also 
subprobability distributions, which enable us to describe two step traditional diffusion models~\cite{kempe03maximizing}, in which with 
probability $(1-\beta)$ an infection may never occur.

\xhdr{Continuous time diffusion process}
We consider diffusion and propagation processes that occur over static networks with known (or inferred) connectivity and trans\-mission rates. 
A diffusion process starts when a source node set $A$ becomes infected at time $t = 0$ by action of an external source to the network. 
Then, source nodes try to infect their children (\ie, neighbors that they can reach directly through an outgoing edge). Once a child $i$ gets infected 
at time $t_i$, it tries to infect her own children, and so on. For some pairwise transmission likelihoods, it may happen that $t_i \rightarrow \infty$ 
and child $i$ is never infected.
Here, we assume that a node $i$ becomes infected as soon as one of its parents (\ie, neighbors that are able to reach node $i$ through an outgoing edge) 
infects it, and later infections by other parents do not contribute anymore towards the evolution of the diffusion process. As a consequence of this assum\-ption, at 
any time $t \geq 0$ there may be some nodes and edges in the network that are useless for the spread of the information (be it in the form of a meme, a sales 
decision or a virus) towards a specific node $n$. If these nodes get infected and transmit the information to other nodes, this information can only reach $n$ 
through previously infected nodes. Therefore, the infection time $t_n$ of node $n$ does not depend on these nodes.


Finally, given a diffusion process that started in the set of source nodes $A$, we define $N(A ; T)$ as the number of nodes infected up to time $T$ and then define the 
influence function $\sigma(A; T)$ as the average total number of nodes infected up to time $T$, \ie, $\sigma(A; T) = \mathbb{E} N(A; T)$.

\xhdr{Continuous time influence maximization problem}
Our goal is to find the set of source nodes $A$ in a diffusion network $G$ that maximizes the influence function $\sigma(A; T)$. In other words, the set of source 
nodes $A$ such that a diffusion process in $G$ reaches, on average, the greatest number of nodes before a window cut off $T$. Thus, we aim to solve:
\begin{equation}
  A^* = \argmax_{|A|\leq\nedge} \sigma(A ; T),
  \label{eq:max-influence-opt}
\end{equation}
where the source set $A$ is the variable to optimize and the time horizon $T$ and the source set cardinality $k$ are cons\-tants.

\section{Proposed algorithm}
\label{sec:proposed}
We start this section by describing how to evaluate the influence function $\sigma(A; T)$ for any set of sources $A$ in a network $G$ using the work
of~\citet{kulkarni1986shortest}. The key observation is that the infection time of a node in a network with stochastic edge lengths is the length of the 
stochastic shortest path from the source nodes to the node. Then, we show that the continuous time in\-fluence ma\-xi\-mi\-zation problem defined by 
Eq.~\ref{eq:max-influence-opt} is NP-hard. Finally, we show how to efficiently find a \emph{provable near-optimal} solution to our maximization problem
by exploiting a natural diminishing returns property of our objective function.

\xhdr{Evaluating the influence}
The influence function depends on the probability of infection of every node in the network as follows:
\begin{align} \label{eq:sigma}
\sigma(A; T) &= \mathbb{E} N(A; T) = \sum_{n=1}^{N} P(t_n \leq T | A),
\end{align}
where $t_n$ is the infection time of node $n$, $A$ is the set of source nodes, and $T$ is the time horizon or time window cut-off. Therefore, we need 
to compute the probability of infection $P(t_n \leq T | A)$ for each node $n$ in the network. Note that whenever $n \in A$, the probability of infection 
$P(t_n \leq T | A)$ is trivially $1$. We will refer to node $n$ as sink node.

Revisiting the basic assumptions about a diffusion process that we presented in Section~\ref{sec:formulation}, we recall some definitions to
describe its temporal evolution as in~\citet{kulkarni1986shortest}. Given a diffusion network $G = (V, E)$, a set of nodes $B \subset V$, and a 
node $n \in V$, we define the set of nodes blocked by or dominated by $B$:
\begin{equation*}
\begin{split}
S_n(B) =& \{u \in V:\, \mbox{any path from $u$ to $n$ in $G$ visits}\\ 
&\mbox{ at least one node in $B$}\}.
\end{split}
\end{equation*}
By definition, $B \subseteq S_n(B)$ and $S_n(S_n(B)) = S_n(B)$. We now define the set $\Omega^*_n$ as:
\begin{equation*}
\Omega^*_n = \{X \subset V :\, X = S_n(X)\}.
\end{equation*}
In words, all nodes in $X \in \Omega^*_n$ block only themselves re\-la\-tive to the sink node $n$. We can find all sets in $\Omega^*_n$ 
efficiently~\cite{georgiadis2004finding, provan1996paradigm}. In particular, we are able to find each $X \in \Omega^*_n$ in 
time $O(|V|)$. However, in dense networks, $|\Omega^*_n|$ can be exponentially large and lead to a worst-case non polynomial 
time algorithm.
In order to illustrate this, we compute $\max_n |\Omega^*_n|$ across $1,000$ random source sets with $|S| = 5$ and $|S| = 10$ for several 
$256$-node hierarchical networks of increasing network density. 
We observe that $\max_n |\Omega^*_n| < 85$ for all networks up to $2$ edges per node in average. However, $\max_n |\Omega^*_n|$ grows 
quickly for higher network densities (\eg, $\max_n |\Omega^*_n| < 7750$ for a network with $2.5$ edges per node in average). In order to overcome 
this drawback, we will propose several speed-ups (LTP and LSN) that provide approximate solutions or sparsify the networks as in~\citet{bonchi11kdd}.

Given a diffusion process that starts in a set of source nodes $A$, a sink node $n$ and any time $t \geq 0$, we denote the set of infected
nodes as $I(t | A)$, the set of useless nodes as $U_n(t | A)$, and the set of \emph{disabled} nodes (\ie, infected or useless) as $X_n(t | A)$. 
Useless nodes are nodes that if they get infected and transmit the information to other nodes, this information can only reach the sink node 
$n$ through pre\-vious\-ly infected nodes.
Figures~\ref{fig:explanation-t0} and~\ref{fig:explanation-t1} illustrate the set of infected nodes ($I$) and the set of useless nodes ($U_n$) for a 
diffusion process in a small network at two different times. Note that the set of disabled nodes ($X_n$) is composed of the sets of infected ($I$) and 
useless nodes ($U_n$). 
By definition of $S_n(\cdot)$, $U_n(t | A) = S_n(I(t | A)) \backslash I(t | A)$ and $X_n(t | A) = S_n(I(t | A))$. Now, by studying 
the temporal evolution of $X_n(t | A)$ we will be able to compute $P(t_n \leq T | A)$.
\begin{table}[t]
    \small
    \centering
    \begin{tabular}{l c c c c}
       \textbf{Algorithm} & \textbf{$\bf |A|$} & \textbf{$\bf{\sigma(A ; 0.1)}$} & \textbf{$\bf{\sigma(A ; 0.5)}$} & \textbf{$\bf{\sigma(A ; 1.0)}$} \\
        \hline
        \multirow{3}{*}{Enumeration} & 1 & 3.05 & 8.95 & 13.90\\
        & 3 & 8.04 & 18.01 & 21.70\\
        & 5 & 11.60 & 22.17 & 25.59\\
        \hline
        \multirow{3}{*}{\maxinf} & 1 & 3.05 & 8.95 & 13.90\\
        & 3 & 8.04 & 18.01 & 21.70\\
        & 5 & 11.60 & 22.17 & 25.59\\
        \hline
        \multirow{3}{*}{Greedy} & 1 & 3.05 &  7.70 & 9.31\\
        & 3 & 6.18 & 12.70 & 15.88\\
        & 5 & 8.62 & 16.05 & 19.70\\
        \hline
        \multirow{3}{*}{PMIA} & 1 & 1.20 & 1.67 & 1.89 \\
        & 3 & 4.90 & 11.67 & 16.80 \\
        & 5 & 8.90 & 18.03 & 22.02\\
        \hline
        \multirow{3}{*}{SP1M} & 1 & 2.15 & 8.04 & 11.66\\
        & 3 & 4.88 & 10.75 & 13.14\\
        & 5 & 7.96 & 13.95 & 16.16\\
        \hline
        \multirow{3}{*}{Random} & 1 & 1.61 & 3.77 & 5.34\\
        & 3 & 4.63 & 9.29 & 11.84\\
        & 5 & 7.39 & 13.36 & 16.04
    \end{tabular}
    \caption{Influence $\sigma(A ; T)$ that enumeration, \maxinf and several other baselines achieve in a small Kronecker core-periphery network with 35 nodes and 39 
    edges for different time horizon values $T$ and number of sources $|A|$. \maxinf always achieves the optimal influence that exhaustive search gives but several order 
    of magnitude faster.}
    \label{tab:performance-small-network}
\end{table}

First, for a diffusion process that starts in the set of source nodes $A$, it can be shown that the set of disabled nodes $X_n(t | A)$ at any time $t \geq 0$ 
belongs to $\Omega^*_n$.
\begin{theorem} (\citet{kulkarni1986shortest})
Given a set of source nodes $A$, a sink node $n$ and any time $t \geq 0$, $X_n(t | A) \in \Omega^*_n$.
\end{theorem}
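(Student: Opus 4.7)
The plan is to verify directly that $X_n(t \mid A)$ satisfies the fixed-point equation $X = S_n(X)$ that defines membership in $\Omega^*_n$. By the definitions recalled just above the theorem, we already have $X_n(t \mid A) = S_n(I(t \mid A))$. So it suffices to show that the operator $S_n$ is idempotent, i.e., $S_n(S_n(B)) = S_n(B)$ for every $B \subseteq V$, and then specialize to $B = I(t \mid A)$. In fact, idempotency of $S_n$ is one of the two set-theoretic properties the paper asserts immediately after introducing $S_n$, so the theorem ultimately reduces to verifying that assertion.

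The first step I would take is the easy inclusion $S_n(B) \subseteq S_n(S_n(B))$: this is an instance of the general property $B' \subseteq S_n(B')$ applied with $B' = S_n(B)$. The second step is the reverse inclusion $S_n(S_n(B)) \subseteq S_n(B)$, which I would prove by a path-composition argument. Take any $u \in S_n(S_n(B))$ and any path $P$ from $u$ to $n$ in $G$. By definition of $S_n(S_n(B))$, $P$ visits some node $v \in S_n(B)$. The sub-path of $P$ from $v$ to $n$ is itself a path from $v$ to $n$, and by definition of $S_n(B)$ it must visit some node in $B$. Hence $P$ visits a node in $B$, and since $P$ was arbitrary, $u \in S_n(B)$.

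Combining the two inclusions gives $S_n(X_n(t \mid A)) = S_n(S_n(I(t \mid A))) = S_n(I(t \mid A)) = X_n(t \mid A)$, which is exactly the condition $X = S_n(X)$ required for membership in $\Omega^*_n$. I do not expect a genuine obstacle here; the only point worth being careful about is that $S_n$ is defined via \emph{all} paths to $n$, so the path-composition step must apply to arbitrary (not merely simple or shortest) paths. Since chopping a path at any intermediate vertex always yields a valid sub-path to $n$, this causes no difficulty, and the argument goes through for every $t \geq 0$ without using any dynamical information about the diffusion beyond the fact that $X_n(t \mid A)$ is of the form $S_n(\cdot)$.
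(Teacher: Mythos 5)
Your argument is correct, and it fills in a proof that the paper itself omits: Theorem~1 is simply cited from \citet{kulkarni1986shortest}, and the paper's only supporting remarks are the assertion ``$B \subseteq S_n(B)$ and $S_n(S_n(B)) = S_n(B)$'' (stated without proof) together with the identity $X_n(t \mid A) = S_n(I(t \mid A))$. You have correctly identified that, given those two facts, the theorem is immediate, and your path-composition argument for $S_n(S_n(B)) \subseteq S_n(B)$ is exactly the right way to justify the idempotency claim the paper takes for granted; your remark that truncating an arbitrary (not necessarily simple) path at an intermediate vertex still yields a valid path to $n$ is the only point of care, and you handle it. The one thing worth flagging is that your proof treats $X_n(t \mid A) = S_n(I(t \mid A))$ as definitional, as the paper does; strictly speaking this identity encodes the modeling assumption that the useless nodes are precisely $S_n(I)\setminus I$, so no dynamical content remains to verify, which is consistent with your observation that the argument uses nothing about the diffusion beyond the form $S_n(\cdot)$. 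No gap.
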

%
%
Figure~\ref{fig:explanation-omega} enumerates all sets of disabled nodes $X \in \Omega^*_n$ such that $A \subseteq X$ for the small network depicted in 
Fi\-gures~\ref{fig:explanation-t0} and~\ref{fig:explanation-t1}. They represent the states that we need to describe the evolution of a diffusion process that starts in the
set of sources $A$ relative to the sink node $n$. Now, assu\-ming independent pairwise exponential transmission likelihoods in the diffusion network, the following Th.
applies:
\begin{theorem} (\citet{kulkarni1986shortest}) \label{th:ctmc}
Given a set of source nodes $A$, a sink node $n$ and independent pairwise exponential transmission likelihoods $f(t_j | t_i ; \alpha_{i,j})$, $\{X_n(t | A), t \geq 0\}$ 
is a continuous time Markov chain (CTMC) with state space $\{X : X \in \Omega^*_n, A \subseteq X\}$ and infinitesimal generator matrix $Q = [q(D, B)]$ ($D, B \in 
\{X : X \in \Omega^*_n, A \subseteq X\}$) given by:
\begin{equation*}
q(D, B) = \left\{
	\begin{array}{l l}
    	\sum_{(i,j) \in C_v(D)} \alpha_{i,j} & \hspace{-2.5mm} \exists{v} : B = S_n(D \cup \{v\}), \\
 	-\sum_{(i,j) \in C(D)} \alpha_{i,j} & \hspace{-2.5mm} B = D, \\
	0 & \hspace{-2.5mm} \mbox{otherwise.}
	\end{array}\right.
\end{equation*}
where $C(D)$ is the unique minimal cut between $D$ and $\bar{D} = V \backslash D$ and $C_v(D) = \{(u,v) \in C(D)\}$.
\end{theorem}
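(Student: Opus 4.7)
The plan is to verify the four defining ingredients of a CTMC in turn: the state space, the Markov property, the identity of the possible successor states, and the exact transition rates.

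First, I would establish the state space. Theorem~1 already gives $X_n(t|A) \in \Omega^*_n$ for all $t \geq 0$. Because the source set $A$ is infected at time $0$ and nodes never ``un-infect'', $A \subseteq I(t|A) \subseteq X_n(t|A)$ always. So the reachable states lie in $\{X \in \Omega^*_n : A \subseteq X\}$, as claimed.

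Second, I would set up the mechanism driving the dynamics. Think of each edge $(i,j)$ as carrying an independent exponential clock with rate $\alpha_{i,j}$ that starts once $i$ is infected; the memoryless property will give the Markov property with respect to $D = X_n(t|A)$ once I know which clocks are ``live''. The key auxiliary lemma is that every edge in $C(D)$ has its tail in $I(t|A)$: if $u \in D \setminus I(t|A)$ were a useless node with an outgoing edge $(u,v)$ to some $v \in \bar{D}$, then since $v \notin S_n(I(t|A))$ there would be a path from $v$ to $n$ avoiding $I(t|A)$, and prepending $(u,v)$ would give a path from $u$ to $n$ avoiding $I(t|A)$, contradicting $u \in S_n(I(t|A))$. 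Hence all tails of edges in $C(D)$ are currently infected, so all these clocks are live, and the memoryless property of the exponential implies the residual firing times are again independent exponentials with their original rates, independent of the past conditional on $D$.

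Third, I would identify transitions and rates. By the race-of-exponentials argument, the next event out of state $D$ occurs after an $\mathrm{Exp}\bigl(\sum_{(i,j) \in C(D)} \alpha_{i,j}\bigr)$ time (yielding the diagonal entry $q(D,D) = -\sum_{(i,j) \in C(D)} \alpha_{i,j}$), and the triggering edge is $(i,v) \in C_v(D)$ with probability $\alpha_{i,v}/\sum_{(i',j') \in C(D)} \alpha_{i',j'}$. When such an edge fires, $v \in \bar{D}$ joins $I$, and the new disabled set is
\begin{equation*}
S_n(I(t|A) \cup \{v\}) = S_n(S_n(I(t|A)) \cup \{v\}) = S_n(D \cup \{v\}) = B,
\end{equation*}
using idempotence and monotonicity of $S_n$. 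Multiplying the race rate by the branching probability gives the claimed off-diagonal rate $q(D,B) = \sum_{(i,v) \in C_v(D)} \alpha_{i,v}$. To complete the third case $q(D,B) = 0$, I would check that fires of clocks out of $I(t|A)$ into useless nodes $u \in D \setminus I(t|A)$ leave $X_n$ unchanged (since $u \in S_n(I)$ gives $S_n(I \cup \{u\}) = S_n(I) = D$), so such events are not transitions of the chain; hence the only reachable successors from $D$ are of the form $S_n(D \cup \{v\})$ for some $v$, and all other $B$ have rate zero.

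The main obstacle is the cut-tail lemma together with the bookkeeping for ``collateral'' disablings: a single infection of $v \in \bar{D}$ can cause additional nodes of $\bar{D}$ to become useless (every path to $n$ now forced through $v$), so $B$ may be strictly larger than $D \cup \{v\}$. The formulation $B = S_n(D \cup \{v\})$ handles this automatically, but one must verify that these induced useless nodes do not spawn extra transitions or violate the Markov structure, which ultimately reduces once more to the idempotence $S_n \circ S_n = S_n$ and to the cut-tail lemma applied at the new state $B$.
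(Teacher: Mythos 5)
The paper does not prove this theorem at all: it is imported verbatim from \citet{kulkarni1986shortest} as a black box, so there is no in-paper argument to compare against. Your reconstruction is correct and matches the standard proof of Kulkarni's result. The two load-bearing facts are exactly the ones you isolate: (i) the cut-tail lemma, that every edge of $C(D)$ has its tail in $I(t|A)$ (so all racing clocks are live and, by memorylessness, their residuals depend on the history only through $D$), and (ii) the monotonicity and idempotence of $S_n$, which give $S_n(I\cup\{v\}) = S_n(D\cup\{v\})$ for the successor state and $S_n(I\cup\{u\}) = D$ for infections of already-useless nodes, so the latter are invisible to the chain. One minor caveat worth recording if you wrote this up fully: as stated, the off-diagonal formula implicitly assumes the witness $v$ with $B = S_n(D\cup\{v\})$ is unique; if two distinct nodes $v, v'$ outside $D$ yielded the same $B$, the rate would have to be the sum over both, and ruling this out requires the acyclicity of the induced diffusion graph (two such nodes would each have to lie on the other's $D$-avoiding path to $n$, forcing a cycle). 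This is a defect of the transcribed statement rather than of your argument, but your proof should note which convention it adopts.
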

\begin{figure*}[!t]
\centering
  \subfigure[Forest Fire]{\includegraphics[width=0.22\textwidth]{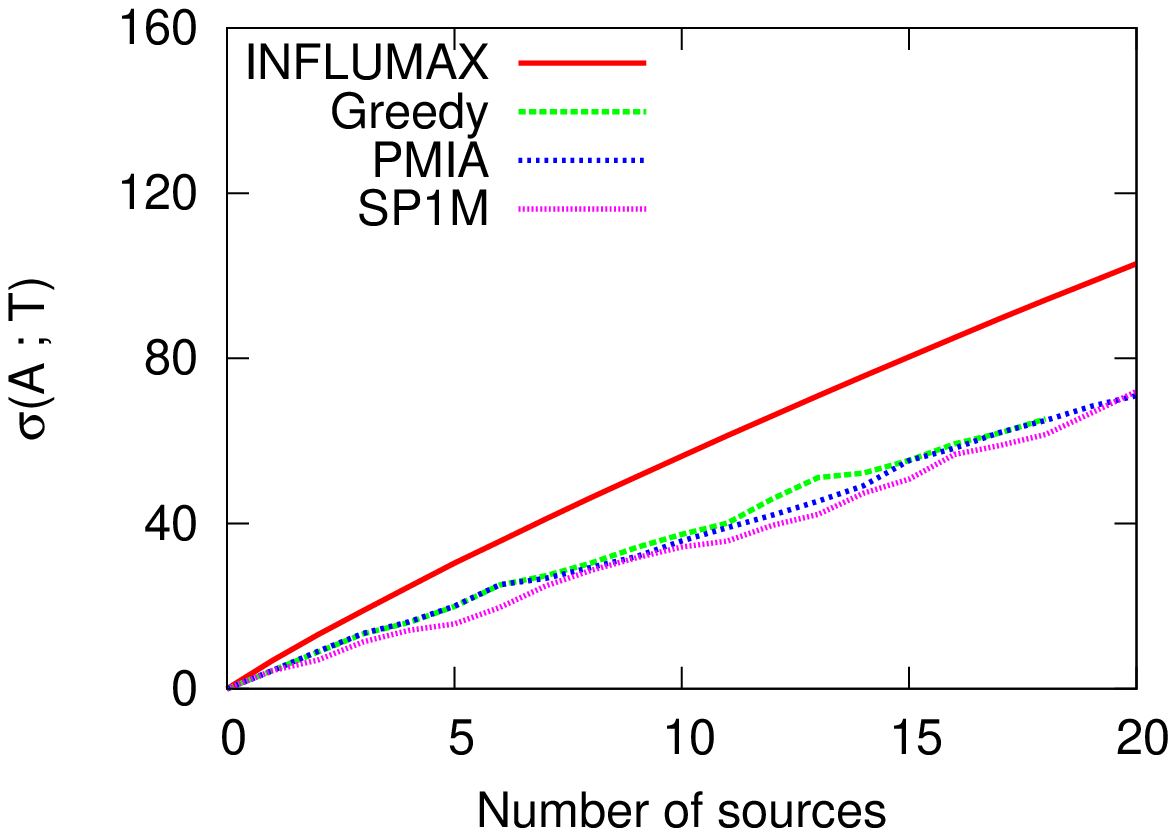} \label{fig:ff1}} \hspace{9mm}
  \subfigure[Random Kronecker]{\includegraphics[width=0.22\textwidth]{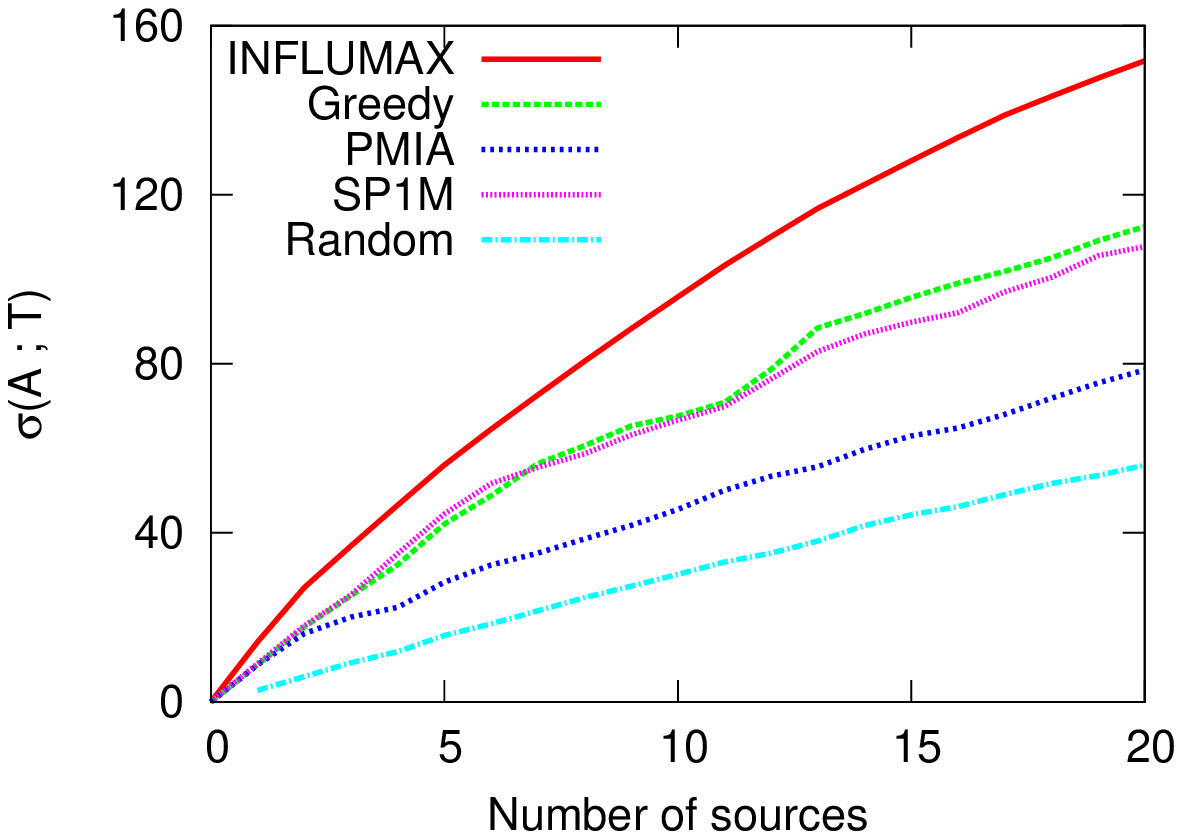} \label{fig:kro}} \hspace{9mm}
  \subfigure[Hierarchical Kronecker]{\includegraphics[width=0.22\textwidth]{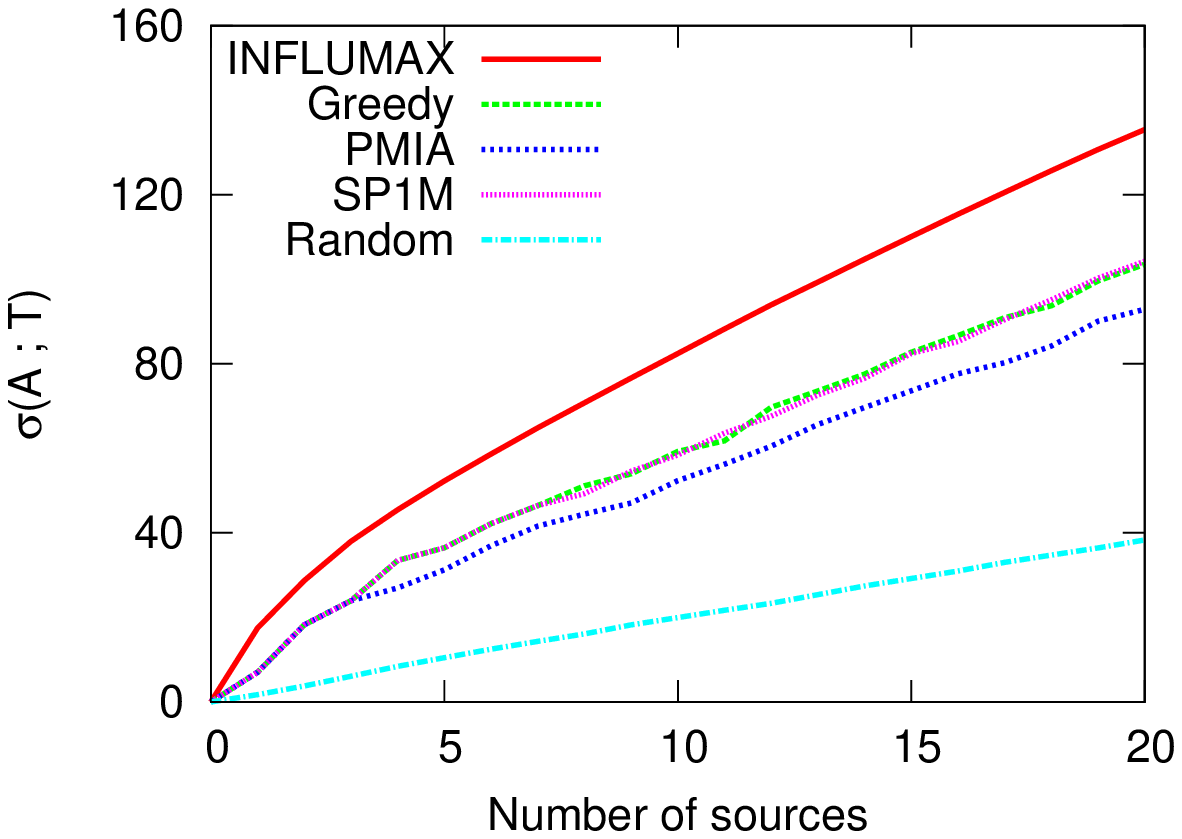} \label{fig:real}} 
	\caption{
	Panels plot influence $\sigma(A ; T)$ (\ie, average number of infected nodes) for $T=1$ and transmission rates drawn from \mbox{$\alpha \sim U(0, 5)$} against number of sources. 
	(a): 1,024 node Forest Fire network. 
	(b): 512 node random Kronecker network.
	(c): 1,024 node hierarchical Kronecker network. The proposed algorithm \maxinf outperforms all other methods typically by at least 20\%.
	}
	\label{fig:performance}
\end{figure*}

Finally, let $t_n$ be the \emph{length} of the fastest (shortest) directed path from any of the nodes in $A$ to the sink node $n$ in the directed acyclic graph (DAG) induced 
by the diffusion process on network $G$. By construction of the CTMC $\{X_n(t | A), t \geq 0\}$ in Theorem~\ref{th:ctmc},
\begin{equation*}
t_n = \min\{t \geq 0 :\, X_n(t | A) = S_N | X_n(0 | A) = S_1\},
\end{equation*}
where $S_1$ and $S_N$ denote respectively the first and last state of the CTMC. The \emph{length} of the fastest (shortest) path is thus equivalent to the time until the 
CTMC $\{X_n(t | A), t \geq 0\}$ becomes absorbed in the final state $S_N$ starting from state $S_1$ (\ie, the state in which only the source nodes in $A$ are infected). 
Then, computing the probability of infection of the sink node $P(t_n \leq T | A)$ reduces to com\-pu\-ting the distribution of time of the sink state of the CTMC. Such distributions 
are called continuous phase-type distributions. Their generator matrix $Q$ and the cumulative density function satisfy~\cite{gikhman2004theory}:
\begin{equation*}
P(t_n \leq T | A) = 1-[1 \mathbf{0}]' e^{S T} \mathbf{1},\, \mbox{where }\, Q = \begin{bmatrix}
S & S^0 \\
\mathbf{0}' & 0
\end{bmatrix},
\end{equation*}
where $e^{S T}$ denotes the exponential matrix, $S$ is the submatrix of $Q$ that results from removing the column and row associated to the last state $S_N$, and 
$S^0 = -S \mathbf{1}$. By construction, $\{X_n(t | A), t \geq 0\}$ has the structure of a DAG and it is usually sparse. Then, $S$ is upper triangular, sparse and $e^{S T}$ can 
be computed efficiently. 

As noted in~\citet{kulkarni1986shortest}, this approach can be easily extended to diffusion networks with phase-type transmission likelihoods, which can approximate power-laws, 
Rayleigh or subprobability distributions. 

\xhdr{Maximizing the influence}
We have shown how to ana\-lytically evaluate our objective function $\sigma(A; T)$ for any set of sources $A$. However, optimizing 
$\sigma(A; T)$ with res\-pect to the set of sources $A$ seems to be a cumbersome task and naive brute-force search over all $k$ node 
sets is intractable even for relatively small networks. Indeed, we cannot expect to find the optimal solution to the continuous time influence 
maximization problem defined by Eq.~\ref{eq:max-influence-opt} since it is NP-hard:
\begin{theorem}\label{thm:np_maxinfluence}
Given a network $G = (V, E)$, a set of nodes $A \subseteq V$ and a time horizon $T$, the continuous time influence maximization problem defined by 
Eq.~\ref{eq:max-influence-opt} is NP-hard.
\end{theorem}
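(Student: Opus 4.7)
The plan is to reduce from the classical Set Cover problem, which is NP-complete. Given a Set Cover instance with universe $U = \{u_1,\dots,u_n\}$, a collection $\{S_1,\dots,S_m\}$ of subsets of $U$, and a budget $k$, I will construct a continuous-time diffusion network in which the solution to Eq.~\ref{eq:max-influence-opt} decides whether a size-$k$ cover exists.

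The construction is bipartite: introduce one ``subset'' node $s_i$ for each $S_i$, one ``element'' node $v_j$ for each $u_j$, and a directed edge $(s_i, v_j)$ with exponential transmission rate $\alpha$ whenever $u_j \in S_i$. Choose $\alpha$ and the time horizon $T$ so that $\alpha T > \ln n$; for instance, $\alpha = \lceil \ln(n+1)\rceil$ and $T = 1$ gives $e^{-\alpha T} < 1/n$ and is polynomially encodable. By construction, the $s_i$'s have no in-edges and the $v_j$'s have no out-edges, so for any source set $A$ with $|A|=k$, splitting $A$ into its subset-side part $A_s$ and element-side part $A_v$,
\[
\sigma(A; T) \;=\; k \;+\; \sum_{j\,:\, v_j \notin A_v} \Bigl(1 - \prod_{s_i \in A_s,\; u_j \in S_i} e^{-\alpha T}\Bigr),
\]
where an empty product is read as $1$ (giving a zero contribution).

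I then argue three cases to tie influence back to coverage: (i) if $A_v \neq \emptyset$, the sum has at most $n-1$ terms each bounded by $1$, so $\sigma(A;T) \leq k + (n-1)$; (ii) if $A_v = \emptyset$ and $A_s$ does not cover $U$, some $v_j$ has an empty product and contributes $0$, so again $\sigma(A;T) \leq k + (n-1)$; (iii) if $A_v = \emptyset$ and $A_s$ covers $U$, every product is a nonempty power of $e^{-\alpha T}$, giving $\sigma(A;T) \geq k + n(1 - e^{-\alpha T}) > k + n - 1$ by the choice of $\alpha T$. Hence a size-$k$ cover of $U$ exists if and only if the maximum of $\sigma(A;T)$ over $|A|\leq k$ exceeds $k+n-1$.

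The main obstacle is not the hardness argument itself, which is short once the reduction is in place, but the bookkeeping needed to rule out mixed source sets and to keep $\alpha$ and $T$ of polynomial bit-length; both are handled by the strict gap $n(1-e^{-\alpha T}) > n-1$ together with the explicit choice above. A subtle point worth checking is that the computation of $\sigma$ under this construction matches the CTMC formulation of Theorem~\ref{th:ctmc}, which it does because the induced bipartite DAG makes each $v_j$'s infection time the minimum over independent exponential edge delays from $A_s$, whose survival probability is exactly $\prod e^{-\alpha T}$. Thus any polynomial-time algorithm solving Eq.~\ref{eq:max-influence-opt} would decide Set Cover in polynomial time, establishing NP-hardness.
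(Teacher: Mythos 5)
Your reduction is correct, but it takes a genuinely different route from the paper. The paper's proof is a one-line reduction \emph{to} prior work: let $T \to \infty$, observe that the independent cascade model of \citet{kempe03maximizing} becomes a special case of the continuous time model, and invoke their Theorem~2.4. You instead give a self-contained reduction \emph{from} Set Cover, essentially re-running the Kempe et al.\ argument directly inside the continuous time model with a finite horizon and genuine exponential delays. Your bipartite gadget, the closed form $\sigma(A;T)=k+\sum_{j:v_j\notin A_v}\bigl(1-\prod_{s_i\in A_s,\,u_j\in S_i}e^{-\alpha T}\bigr)$, and the gap argument via $e^{-\alpha T}<1/n$ are all sound (the only unstated detail is that source sets of size strictly less than $k$ satisfy $\sigma\leq |A|+n\leq k+n-1$ and so never cross the threshold, which is immediate). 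What your approach buys is worth noting: the paper's $T\to\infty$ embedding of the general IC model strictly speaking requires subprobability transmission likelihoods (so that an edge may never fire with probability $1-\beta$), which the paper treats only as an extension of its exponential setting; your proof establishes hardness for the exponential model and finite $T$ exactly as formulated in Eq.~\ref{eq:max-influence-opt}, at the cost of a longer argument and the bookkeeping needed to keep $\alpha$ and $T$ polynomially encodable.
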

\begin{proof} 
If we let $T \rightarrow \infty$, the independent cascade model is a particular case of our continuous time diffusion model. Then, our problem 
is NP-hard by applying Th.~2.4 in~\citet{kempe03maximizing}.
\end{proof}

By construction, $\sigma(\emptyset, T) = 0$ and $\sigma(A; T) \geq 0$. It also \-follows trivially that $\sigma(A; T)$ is monotonically non\-de\-crea\-sing in the set of source 
nodes $A$, \ie, $\sigma(A; T) \leq \sigma(A'; T)$, whenever $A \subseteq A'$.  Fortunately, we now show that the objective function $\sigma(A; T)$ is a submodular 
function in the set of source nodes $A$.  A set function $F: 2^{W}\rightarrow\mathbb{R}$ mapping subsets of a finite set $W$ to the real numbers is submodular 
if whenever $A\subseteq B\subseteq W$ and $s\in W\setminus B$, it holds that $F(A\cup\{s\})-F(A)\geq F(B\cup\{s\})-F(B)$, i.e., adding $s$ to the set $A$ provides 
a bigger marginal gain than adding $s$ to the set $B$. By this natural diminishing returns property, we are able to find a \emph{provable near-optimal} solution to 
our problem:
\begin{theorem} \label{thm:submodular_maxinfluence}
Given a network $G = (V, E)$, a set of nodes $A \subseteq V$ and a time horizon $T$, the influence function 
$\sigma(A; T)$ is a submodular function in the set of nodes $A$.
\end{theorem}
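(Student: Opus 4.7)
The plan is to adapt the classical \emph{live-edge} coupling argument of~\citet{kempe03maximizing} to the continuous time setting. Instead of tossing a coin on each edge, I would sample a random length $\tau_{i,j} \sim \mathrm{Exp}(\alpha_{i,j})$, independently for every edge $(i,j) \in E$. Call a joint realization $\boldsymbol{\tau}$ of all these edge lengths a \emph{world}, and let $\mu$ denote its distribution on $\mathbb{R}_{\geq 0}^{|E|}$. Because all pairwise transmission likelihoods are independent exponentials and the infection time of a node is by construction the length of the shortest path from the source set to that node in the weighted graph, this coupling exactly reproduces the dynamics of Section~\ref{sec:formulation}.

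Next, I would fix a world $\boldsymbol{\tau}$ and define, for each node $i \in V$, the deterministic set
\begin{equation*}
R_i(\boldsymbol{\tau}, T) = \{n \in V :\, d_{\boldsymbol{\tau}}(i, n) \leq T\},
\end{equation*}
where $d_{\boldsymbol{\tau}}(i,n)$ is the shortest directed path length from $i$ to $n$ under the edge weights $\boldsymbol{\tau}$. The key observation is that for a source set $A$, the infection time of node $n$ is $\min_{i \in A} d_{\boldsymbol{\tau}}(i, n)$, so $n$ gets infected by time $T$ if and only if $n \in \bigcup_{i \in A} R_i(\boldsymbol{\tau}, T)$. Consequently, in a fixed world, the total number of infections is exactly the coverage function
\begin{equation*}
N_{\boldsymbol{\tau}}(A; T) = \Bigl|\bigcup_{i \in A} R_i(\boldsymbol{\tau}, T)\Bigr|.
\end{equation*}

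A coverage function of this form is a standard textbook example of a monotone submodular set function in $A$: for $A \subseteq B$ and $s \notin B$, the marginal gain $N_{\boldsymbol{\tau}}(A \cup \{s\}; T) - N_{\boldsymbol{\tau}}(A; T)$ counts the elements of $R_s(\boldsymbol{\tau}, T)$ that are not already covered by $\bigcup_{i \in A} R_i$, which is at least the number of elements of $R_s$ not covered by $\bigcup_{i \in B} R_i$. Finally, I would take expectation over $\boldsymbol{\tau} \sim \mu$:
\begin{equation*}
\sigma(A; T) = \mathbb{E}_{\boldsymbol{\tau}}\!\left[ N_{\boldsymbol{\tau}}(A; T) \right],
\end{equation*}
which is consistent with Eq.~\ref{eq:sigma} because $P(t_n \leq T \mid A) = \mathbb{E}_{\boldsymbol{\tau}}[\mathbf{1}\{n \in \bigcup_{i \in A} R_i(\boldsymbol{\tau}, T)\}]$. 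Since a nonnegative mixture of submodular functions is submodular, this yields submodularity of $\sigma(\cdot; T)$.

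The only subtle point, and the step that takes a little care, is justifying that the shared-randomness coupling actually reproduces the marginal law of the diffusion for \emph{every} source set $A$ simultaneously, so that the coupled expectation equals the quantity defined in Section~\ref{sec:formulation}. This follows from the memoryless property of the exponential together with the fact that once a node is first reached along the shortest path, later arrivals along slower paths do not affect $t_n$ (the assumption stated in Section~\ref{sec:formulation}); the distribution of $\{t_n\}_{n \in V}$ under the coupling matches the distribution induced by the continuous-time cascade. Everything else is the standard argument that coverage functions are submodular and that submodularity is preserved under expectations.
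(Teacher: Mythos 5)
Your proof is correct and is essentially the same argument the paper gives: the paper's sample $\mathbf{\Delta t}$ of ``all possible time differences between each pair of nodes'' is exactly your world $\boldsymbol{\tau}$ of independently sampled edge lengths, and both proofs reduce the per-realization influence to the coverage function $|\cup_{k \in A} R(k;T)|$ and then average over realizations. If anything, you are slightly more explicit than the paper about why the shared-randomness coupling reproduces the diffusion's marginal law for every source set simultaneously.
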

\begin{proof}
We follow the proof of Th.~2.2 in~\citet{kempe03maximizing}. 
For simplicity, we assume that the infection time of all nodes in $A$ is $t = 0$; the results generalize tri\-vially. 
Consider the probability distribution of all possible time diffe\-ren\-ces between each pair of nodes in the network. Thus, 
given a sample $\mathbf{\Delta t}$ in the probability space, we define $\sigma_{\mathbf{\Delta t}}(A; T)$ as the total number 
of nodes infected in a time less than or equal to $T$ for $\mathbf{\Delta t}$.

Define $R_{\mathbf{\Delta t}}(k; T)$ as the set of nodes that can be reached from node $k$ in a time shorter than $T$. It follows 
tri\-vially that $\sigma_{\mathbf{\Delta t}}(A; T) = |\cup_{k \in A} R_{\mathbf{\Delta t}}(k; T)|$. Define $R_{\mathbf{\Delta t}}(k | N ; T)$ 
as the set of nodes that can be reached from node $k$ in a time shorter than $T$ and at the same time cannot be reached in a time 
shorter than $T$ from any node in the set of nodes $N \subseteq V$. It follows that $|R_{\mathbf{\Delta t}}(k | N; T)| \geq |R_{\mathbf{\Delta t}}(k | N' ; T)|$ 
for the sets of nodes $N \subseteq N'$.

Consider now the sets of nodes $A \subseteq A' \subseteq V$, and a node $a$ such that $a \notin A'$. Using the definition of 
submodularity,
\begin{align*}
\sigma_{\mathbf{\Delta t}}(A \cup \{a\}; T) - \sigma_{\mathbf{\Delta t}}(A; T) &= |R_{\mathbf{\Delta t}}(a | A ; T)| \\
&\hspace{-14mm}\geq |R_{\mathbf{\Delta t}}(a | A' ; T)| \\
&\hspace{-14mm}= \sigma_{\mathbf{\Delta t}}(A' \cup \{a\}; T) - \sigma_{\mathbf{\Delta t}}(A'; T),
\end{align*}
and thus $\sigma_{\mathbf{\Delta t}}(A; T)$ is submodular. Then, it follows that $\sigma(A; T)$ is also
submodular. 
\end{proof}

A well-known approximation algorithm to maximize mo\-no\-to\-nic submodular functions is the \emph{greedy algorithm}. It adds 
nodes to the source node set $A$ sequentially. In step $k$, it adds the node $a$ which maximizes the \emph{marginal gain} 
$
\sigma(A_{k-1} \cup \{a\} ; T) - \sigma(A_{k-1} ; T).
$
The greedy algorithm finds a source node set which achieves at least a constant fraction $(1-1/e)$ of the optimal~\cite{nemhauser1978analysis}. 

Moreover, we can also use the submodularity of $\sigma(A ; T)$ to acquire a tight \emph{online} bound on the solution quality obtained by \emph{any} 
algorithm:
\begin{theorem}[\citet{leskovec2007cost}]
  For a source set $\hat{A} \subseteq V$ with $k$ sources and a node $a \in V \backslash \hat{A}$, let $\delta_{a} = \sigma(\hat{A} \cup \{a\} ; T) - \sigma(\hat{A} ; T)$ and 
  $a_1, \ldots a_{k}$ be the sequence of $k$ nodes with $\delta_{a}$ in decreasing order. Then,
  $
  \max_{|A| \leq k} \sigma(A; T) \leq \sigma(\hat{A}; T) + \sum_{i=1}^{k} \delta_{a_i}.
  $
\label{th:online-bound}
\end{theorem}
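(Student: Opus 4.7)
The plan is to follow the standard template for online submodularity bounds: dominate the optimum by what the current set together with the optimum achieves, telescope, and then bound each telescoping increment by a single-element marginal using submodularity.

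First, I would fix an optimal solution $A^{*}$ with $|A^{*}|\leq k$ and enumerate its elements as $A^{*}=\{a^{*}_1,\ldots,a^{*}_m\}$ with $m\leq k$. By monotonicity of $\sigma(\cdot;T)$, which was already established just before Theorem~\ref{thm:submodular_maxinfluence}, I have $\sigma(A^{*};T)\leq \sigma(A^{*}\cup \hat{A};T)$. I would then write the right-hand side as a telescoping sum starting from $\hat{A}$ and appending the elements of $A^{*}$ one by one:
\begin{equation*}
\sigma(A^{*}\cup \hat{A};T) = \sigma(\hat{A};T) + \sum_{i=1}^{m}\bigl[\sigma(\hat{A}\cup\{a^{*}_1,\ldots,a^{*}_i\};T) - \sigma(\hat{A}\cup\{a^{*}_1,\ldots,a^{*}_{i-1}\};T)\bigr].
\end{equation*}

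Next I would invoke submodularity of $\sigma$ (Theorem~\ref{thm:submodular_maxinfluence}) to upper bound each telescoped increment by the marginal gain of $a^{*}_i$ with respect to $\hat{A}$ alone. Concretely, for each $i$, the set $\hat{A}\cup\{a^{*}_1,\ldots,a^{*}_{i-1}\}$ contains $\hat{A}$, so submodularity gives
\begin{equation*}
\sigma(\hat{A}\cup\{a^{*}_1,\ldots,a^{*}_i\};T) - \sigma(\hat{A}\cup\{a^{*}_1,\ldots,a^{*}_{i-1}\};T) \leq \sigma(\hat{A}\cup\{a^{*}_i\};T) - \sigma(\hat{A};T) = \delta_{a^{*}_i}.
\end{equation*}
(A small case distinction handles $a^{*}_i\in\hat{A}$, where the left-hand increment is zero while $\delta_{a^{*}_i}\geq 0$ by monotonicity.)

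Finally, I would use a simple sorting argument: since $a_1,\ldots,a_k$ are chosen as the nodes in $V\setminus\hat{A}$ with the $k$ largest values of $\delta_a$, and since $|A^{*}|=m\leq k$, the sum $\sum_{i=1}^{m}\delta_{a^{*}_i}$ is bounded by the sum of the top $k$ values, i.e.\ $\sum_{i=1}^{k}\delta_{a_i}$. Combining the three steps yields exactly
\begin{equation*}
\max_{|A|\leq k}\sigma(A;T) \leq \sigma(\hat{A};T) + \sum_{i=1}^{k}\delta_{a_i}.
\end{equation*}
I expect none of the steps to pose a genuine obstacle; the only mild subtlety is being careful about elements of $A^{*}$ that already lie in $\hat{A}$, which is handled by the monotonicity remark above.
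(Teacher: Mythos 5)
Your proof is correct and is the standard argument for this online bound: the paper itself does not prove the theorem but simply imports it from \citet{leskovec2007cost}, where essentially this same monotonicity--telescoping--submodularity--sorting chain is used. The only point worth making explicit is that extending the sum from the $m\leq k$ elements of $A^{*}\setminus\hat{A}$ to the top $k$ values requires $\delta_{a}\geq 0$, which follows from the monotonicity you already invoke.
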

Lazy evaluation~\cite{leskovec2007cost} can be employed to speed-up the computation of the on-line bound for our algorithm, that we will refer as \maxinf.

\xhdr{Speeding-up \maxinf} We can speed up our algorithm by implementing the following speed-ups:

\emph{Lazy evaluation} (LE, \citet{leskovec2007cost}): it dramatically reduces the number of evaluations of marginal gains by exploiting the submodularity of $\sigma(A ; T)$.

\emph{Localized source nodes} (LSN): for each node $n$, we speed up the computation of $P(t_n \leq T | A)$ by ignoring any $a \in A$ whose shortest path to $n$ traverses 
more than $m$ nodes.

\emph{Limited transmission paths} (LTP): for each node $n$, we speed up the computation of $P(t_n \leq T | A)$ by ignoring any path from $a \in A$ to $n$ that traverses more than 
$m$ nodes.

LSN and LTP should be used with care since they provide an approximate $P(t_n \leq T | A)$. In the remainder of this article, if not specified, we run \maxinf with LE but 
avoid using LSN and LTP.

\section{Experimental evaluation}
\label{sec:evaluation}
We evaluate our algorithm \maxinf on (i) synthetic networks that mimic the structure of real networks and on (ii) real networks inferred from the MemeTracker 
dataset\footnote{Data available at \url{http://memetracker.org}} by using \netrate'{}s public im\-ple\-men\-tation~\cite{manuel11icml}.
We show that \maxinf outperforms three state of the art algorithms: the traditional gree\-dy al\-go\-rithm~\cite{kempe03maximizing}, 
PMIA~\cite{chen2010scalable} and SP1M~\cite{chen2009efficient}.
\begin{figure}[!t]
\centering
  \subfigure[Small network]{\includegraphics[width=0.15\textwidth]{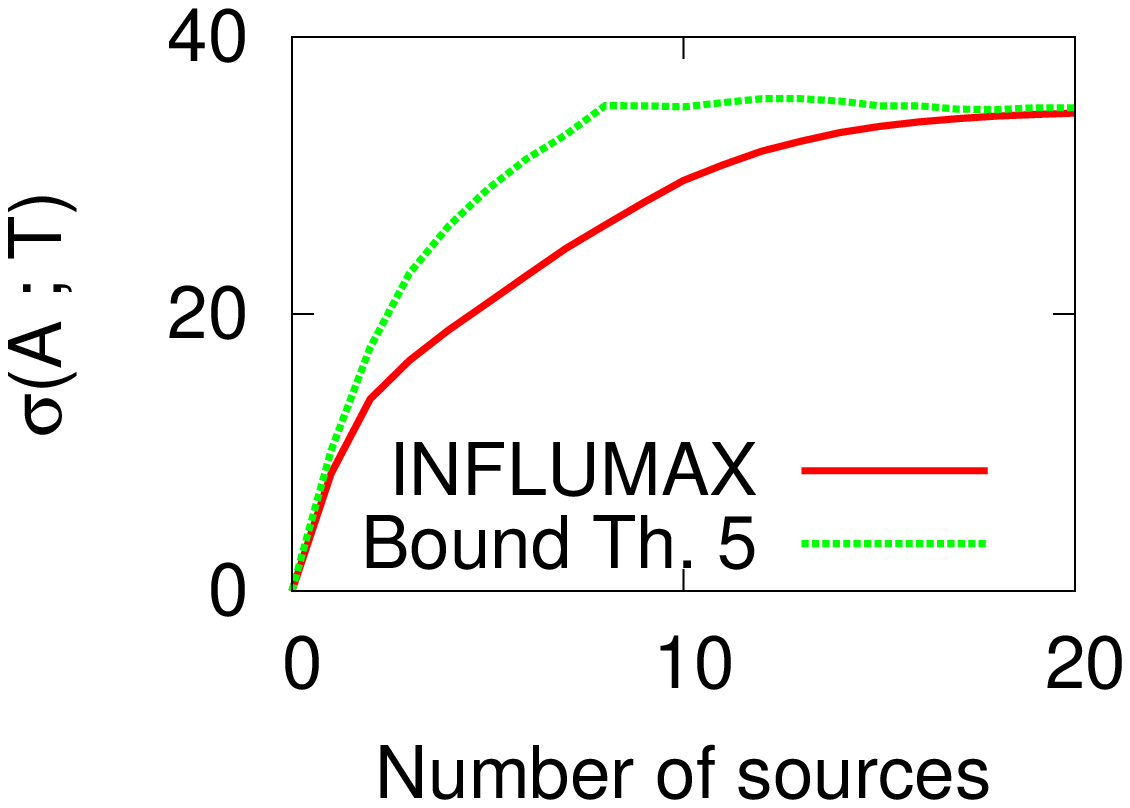} \label{fig:bound-small-kronecker}}
  \subfigure[Large network]{\includegraphics[width=0.15\textwidth]{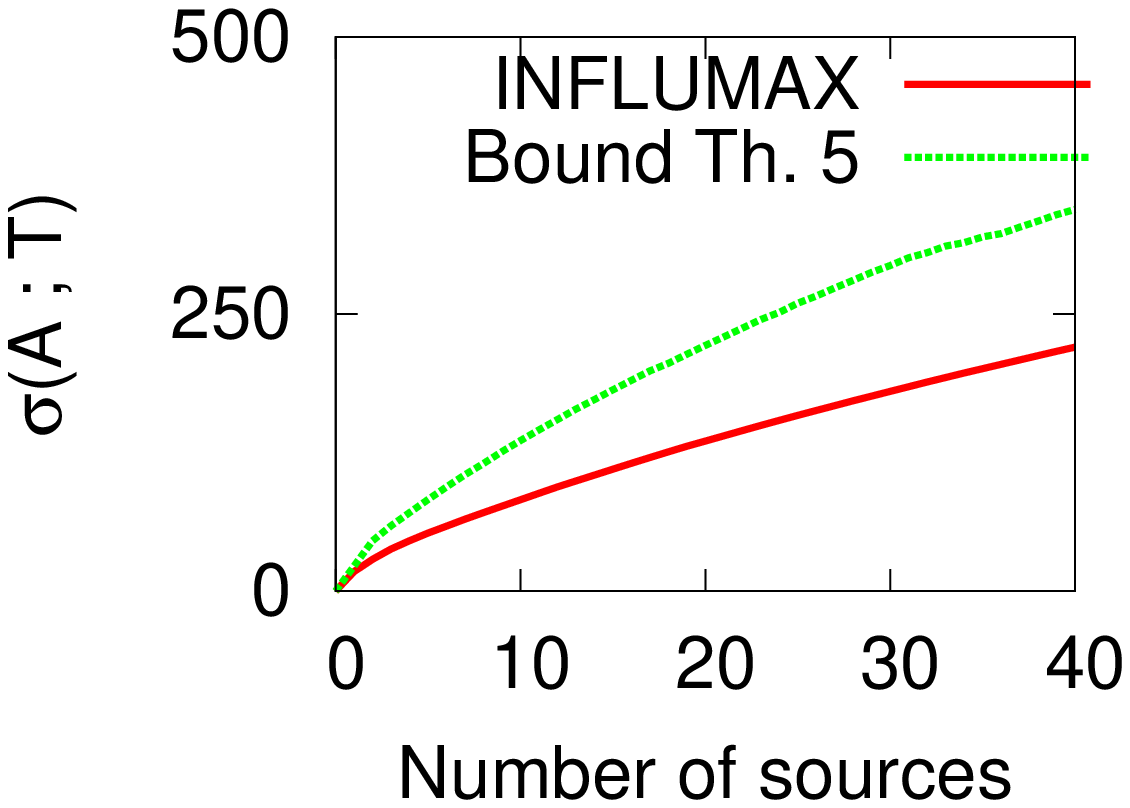} \label{fig:bound-big-kronecker}}
  \subfigure[Real network]{\includegraphics[width=0.15\textwidth]{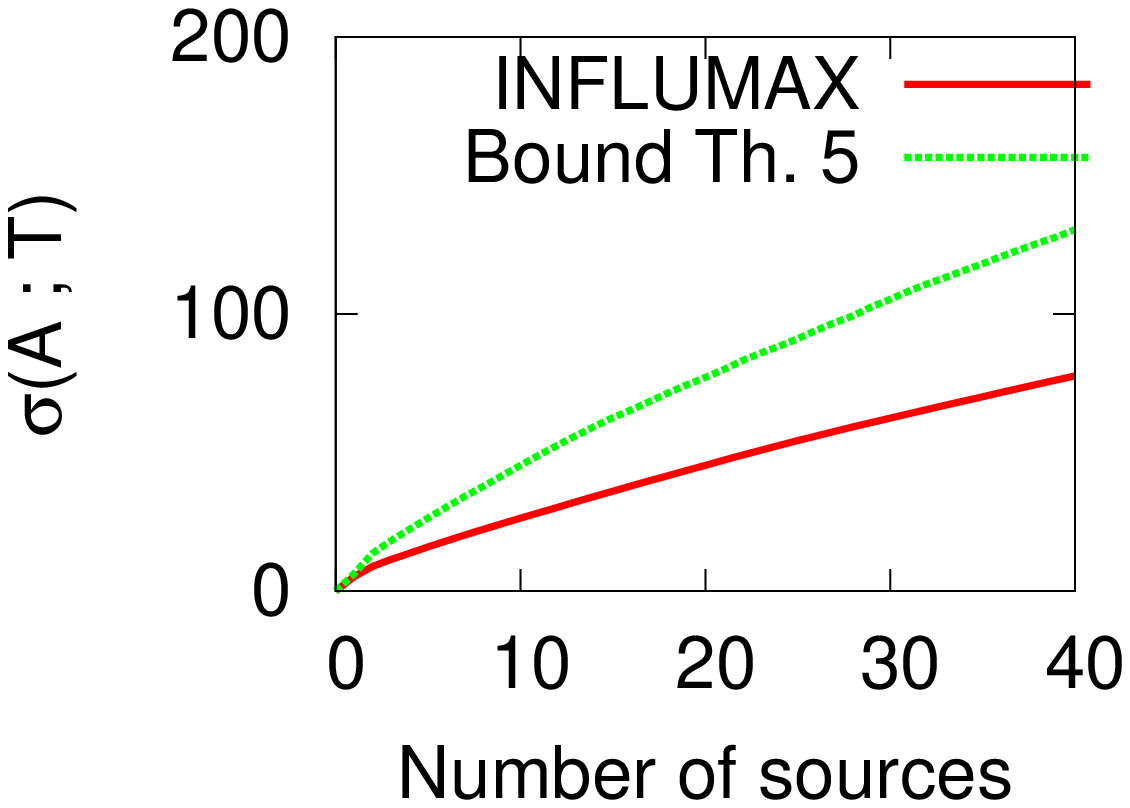} \label{fig:bound-real}}
  \caption{Influence $\sigma(A; T)$ achieved by \maxinf in com\-pa\-ri\-son with the online upper bound from Theorem~\ref{th:online-bound} for $T = 1$. (a) 35-node core-periphery Kronecker 
  network. (b) 1,024 node hierarchical Kronecker network. (c) 1,000 node real diffusion network that we infer from hyperlinks cascades ($T = 1$).}
\label{fig:online-bound}
\end{figure}

\subsection{Experiments on synthetic data}

\xhdr{Experimental setup} We perform experiments on two types of synthetic networks that mimic the structure of directed social networks: Kronecker~\cite{leskovec2010kronecker} and 
Forest Fire (scale free)~\cite{barabasi99emergence} networks. We consider three types of Kronecker networks with very different structure: random~\cite{erdos60random} (parameter 
matrix [0.5, 0.5; 0.5, 0.5]), hierarchial~\cite{clauset08hierarchical} ([0.9, 0.1; 0.1; 0.9]) and core-periphery~\cite{leskovec2010kronecker} ([0.9, 0.5; 0.5, 0.3]).

First, we generate a network $G$ using one of the network models cited above. Then, we draw a transmission rate for each edge $(j, i) \in G$ from a uniform 
distribution. We can control the transmission rate variance across edges in the network by tuning the parameters values of the distribution. In social networks, 
transmission rates model how fast information spreads across the network. Given $G$ and the transmission rates $\alpha_{j,i}$, our aim is to find the most influential 
subset of $k$ nodes, \ie, the subset of nodes that maximizes the spread of information up to a time $T$. In the traditional gree\-dy al\-go\-rithm, PMIA and SP1M, we 
ignore any of the transmission rates and consider all network edges to be active with probability $1$, \ie, we do not consider the temporal dynamics. We did not need 
to use Montecarlo in the traditional greedy algorithm since we assume all edges to be always active.

\xhdr{Solution quality} First, we compare \maxinf to exhaustive search and several state of the art algorithms on a small network. By studying a small network in 
which exhaustive search can be run, we are able estimate exactly how far \maxinf is from the NP-hard to find optimum. We then compare \maxinf to the state of the art 
on diffe\-rent large networks. Running exhaustive search on large networks is computationally too expensive and we compute instead the tight on-line bound from 
Th.~\ref{th:online-bound}.

We compare \maxinf to several state of the art methods on a small core-periphery Kronecker network with 35 nodes and 39 edges and transmission rates drawn from a 
uniform distribution $\alpha \sim U(0, 10)$. 
%
We summarize the results in Table~\ref{tab:performance-small-network}. In addition to \maxinf and three state of the art methods, we also run a baseline that simply chooses the set of 
sources randomly. For all methods, we compute the influence they achieve by evaluating Eq.~\ref{eq:sigma} for the set of sources selected by them.
Surprisingly, \maxinf achieves in most cases the optimal influence that exhaustive search gives but several order of magnitude faster. In other words, the solution given by \maxinf 
may be in practice much closer to the NP-hard to find optimum than $(1-1/e)$, the theoretical guarantee given by~\citet{nemhauser1978analysis}, and it outperforms 
other methods by 20\%.
\begin{figure}[t]
\centering
 \subfigure[$\sigma(A; T)$ vs. T]{\includegraphics[width=0.2\textwidth]{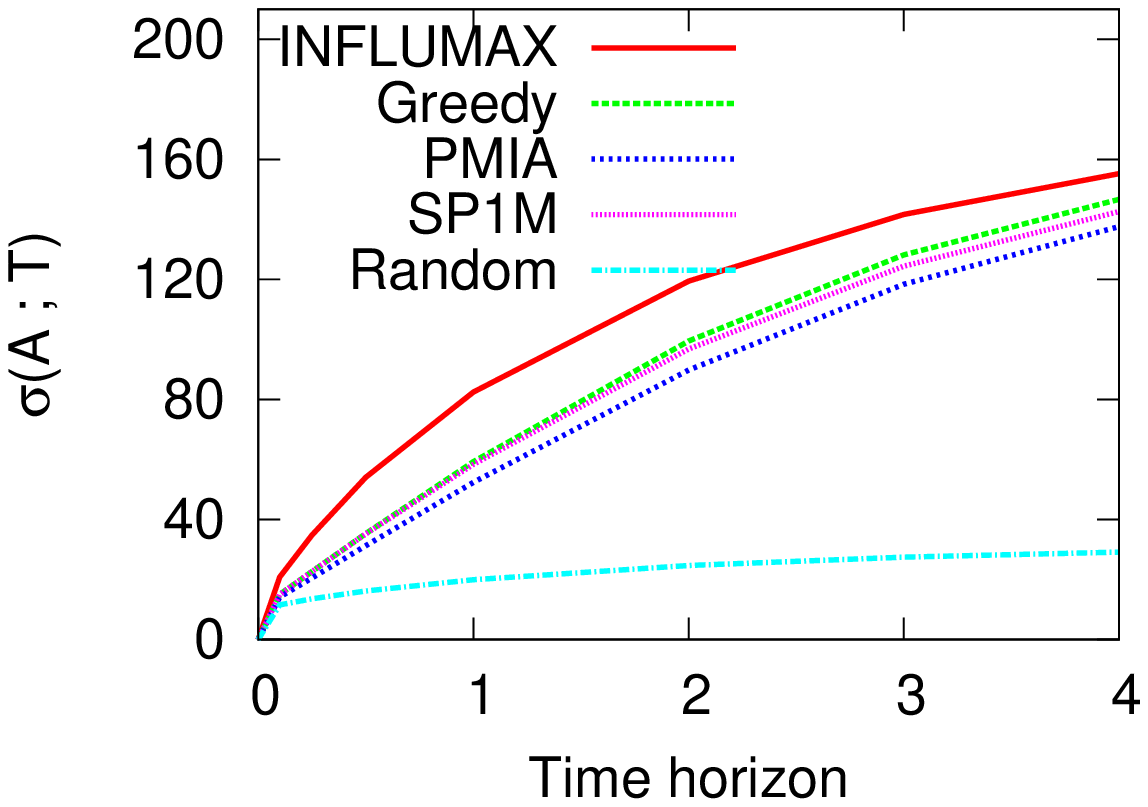} 
 \label{fig:time-horizon}}  \hspace{3mm}
 \subfigure[Running time]{\includegraphics[width=0.2\textwidth]{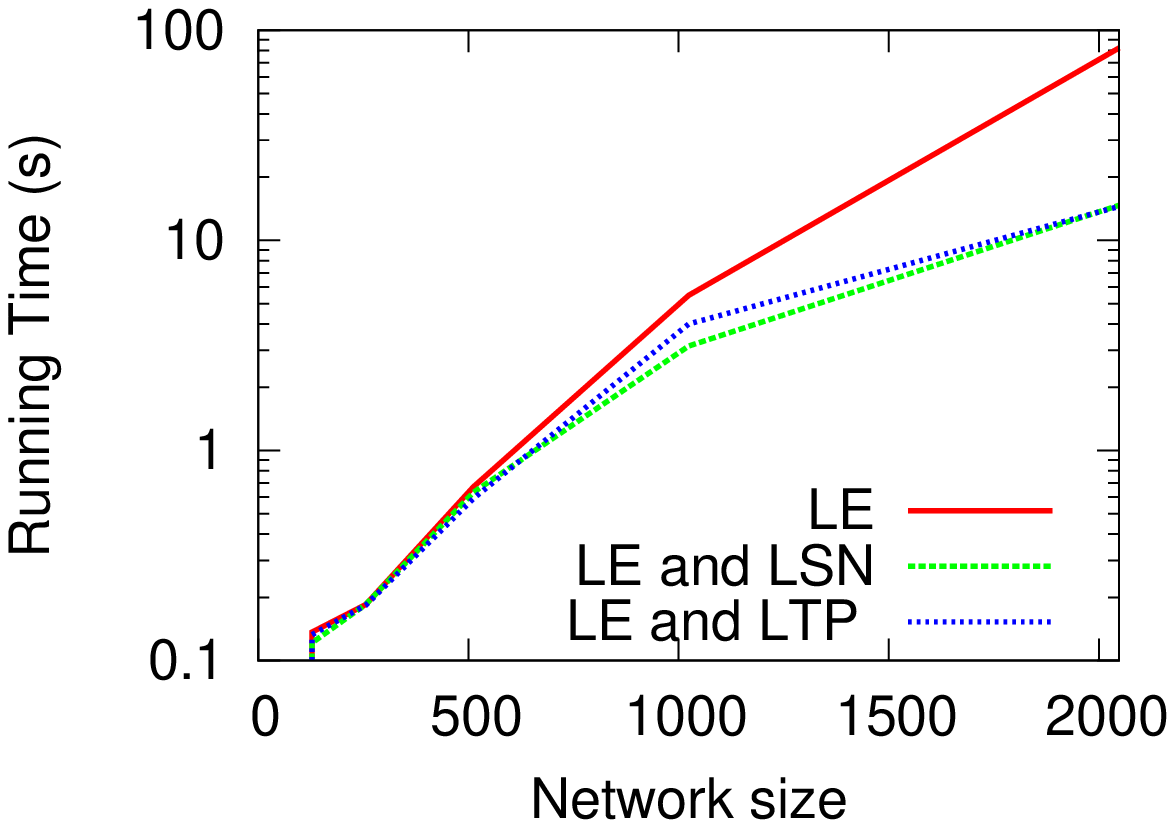} \label{fig:running-time}}
\caption{Panels show (a) influence $\sigma(A; T)$ vs. time horizon and (b) average computation time per source added for \maxinf implemented with (i) lazy 
evaluation (LE), (ii) LE and localized source nodes (LSN, $m = 6$), and (iii) LE and limited transmission paths (LTP, $m = 6$) against number of nodes.}
\end{figure}

Now, we focus on different large synthetic networks. Fi\-gure~\ref{fig:performance} shows the average total number of infected nodes against number of sources that \maxinf achieves in 
comparison with the other methods on a 512 node random Kronecker network, a 1,024 node hierarchical Kronecker network and a 1,024 node Forest Fire (scale free) network. All
three networks have approximately $2$ edges in average per node. We set the time horizon to $T = 1.0$ and the transmission rates are drawn from a uniform distribution $\alpha \sim U(0, 5)$. 
\maxinf typically outperforms other methods by at least 20\% by exploiting the temporal dynamics of the network. We also compare \maxinf with the on-line bound from 
Th.~\ref{th:online-bound}. Fig.~\ref{fig:online-bound} shows the average number of infected nodes against number of sources that \maxinf achieves in comparison with the on-line bound 
for the small core-periphery Kronecker network and the large hierarchical Kronecker network that we used previously. If we pay attention to the value of the bound on the small network for source 
set sizes significantly smaller than the number of nodes in the network, we observe that the bound value on the influence is not as close to the optimal value given by exhaustive search as we could expect. That means that although the bound is not very tight on the large network, we may be actually achieving in practice an almost optimal value on that network too.
\begin{figure}[!t]
\centering
 \subfigure[Hyperlink cascades]{\includegraphics[width=0.2\textwidth]{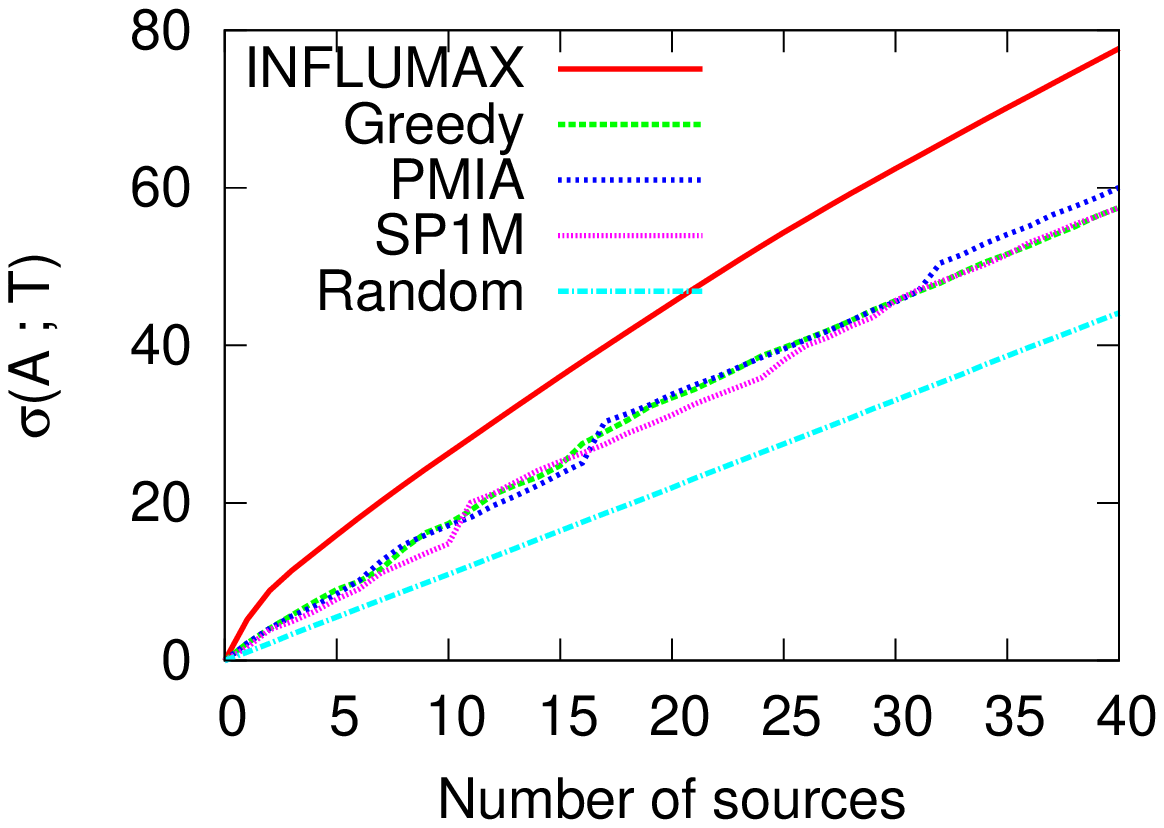}} \hspace{3mm}
 \subfigure[MemeTracker cascades]{\includegraphics[width=0.2\textwidth]{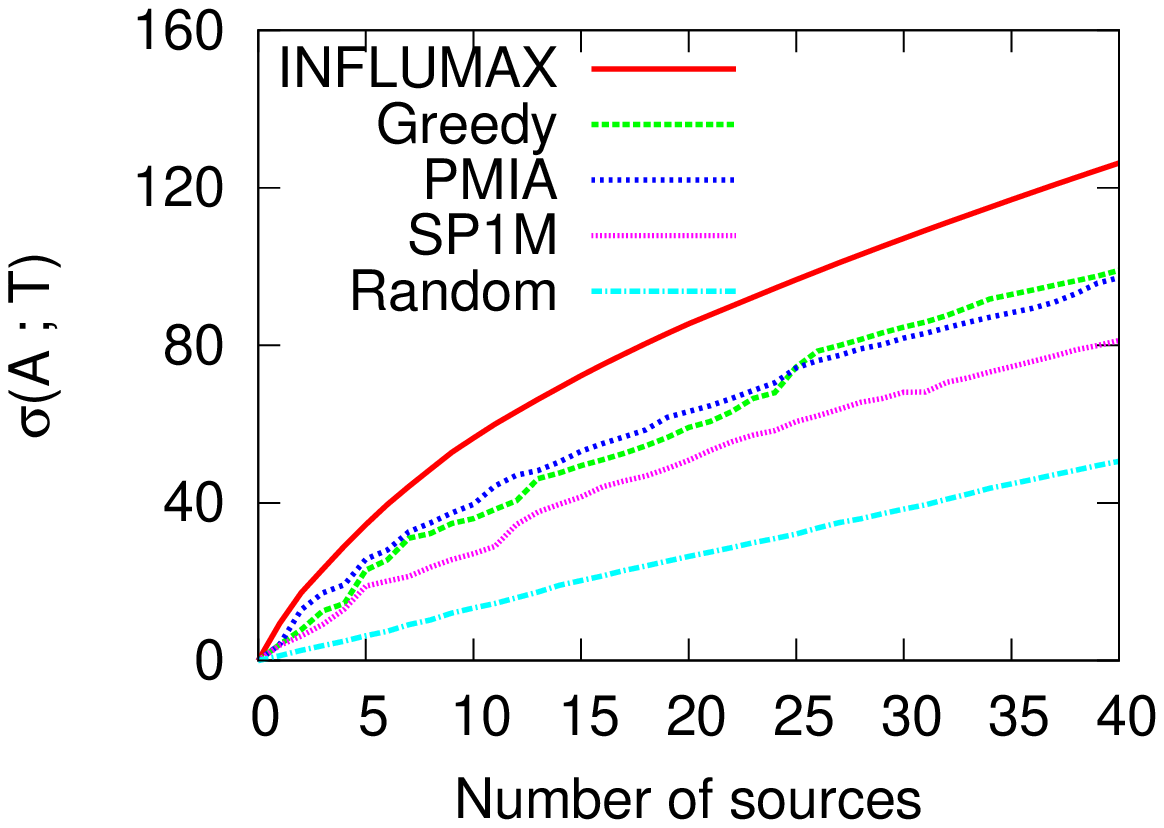}}
	\caption{Influence $\sigma(A ; T)$ for time horizon $T = 1$ against number of sources for 
	(a) a 1,000 node real diffusion network that we infer from hyperlinks cascades and
	(b) a 1,000 node real diffusion network that we infer from MemeTracker cascades.
	The proposed algorithm \maxinf outperforms all other methods by 20-25\%.}
	\label{fig:real}
\end{figure}

\xhdr{Influence vs. time horizon} Intuitively, the smaller the time horizon, the more important the temporal dynamics become when choosing the subset of most influential nodes of a 
given size. Fig.~\ref{fig:time-horizon} shows the average total number of infected nodes against time horizon for a hierarchical Kronecker network with 1,024 nodes and approx.
2 edges per node. We consider a source set of cardinality \mbox{$|A| = 10$} and we draw the transmission rate of each edge from a uniform distribution $\alpha \sim U(0, 5)$. The experimental 
results for all transmission rates configurations confirm the initial intuition, \ie, the difference between \maxinf and other methods is greater for small time horizons. 

\xhdr{Running time} Fig.~\ref{fig:running-time} shows the average computation time per source added of our algorithm implemented (i) with lazy evaluation, (ii) with lazy evaluation
and localized source nodes with $m = 6$ hops and (iii) with lazy evaluation and limited transmission paths with $m = 6$ hops on a single CPU (2.3 Ghz Dual Core with 4 GB RAM). We use hierarchical Kronecker networks with an increasing number of nodes but approximately the same network density since real networks are usually sparse. 
Remarkably, the number of hops that we use in localized source nodes and limited transmission paths result in an approximation error for the influence $\sigma(A; T)$ of at most
10\%, while achieving an speed-up of $\sim$$5$x for the largest network (2,048 nodes).

\subsection{Experiments on real data}
\xhdr{Experimental setup} We used the publicly available MemeTracker dataset, which contains more than $172$ mi\-llion news articles 
and blog posts~\cite{leskovec2009meme}. We trace the information in two different ways and then infer two different diffusion 
networks using \netrate~\cite{manuel11icml}.

First, we find more than 100,000 hyperlink cascades in the MemeTracker dataset. Each hyperlink cascade consists of a collection of time-stamped hyperlinks 
between sites (in blog posts) that refer to closely related pieces of information. From the hyperlink cascade data, we infer an underlying diffusion networks with 
the top (in terms of hyperlinks) 1,000 media sites and blogs.
Second, we apply the MemeTracker methodology~\cite{leskovec2009meme} to find 343 million short textual phrases. We cluster the phrases to aggregate different 
textual variants of the same phrase and consider the 12,000 largest clusters. Each phrase cluster is a MemeTracker cascade. Each cascade consists of a 
collec\-tion of time-stamps when sites (in blog posts) first mentioned any phrase in the cluster. From the MemeTracker cascades, we infer an underlying diffusion 
network with the top (in terms of phrases) 1,000 media sites and blogs.
Then, we sparsify further the networks by kee\-ping the 1,000 fastest edges since it has been shown that in the context of influence maximization, computations on 
sparsified models give up little accuracy, but improves scalability~\cite{bonchi11kdd}.

\xhdr{Solution quality} Fig.~\ref{fig:real} shows the average total number of infected nodes against number of sources that \maxinf achieves in comparison with other 
methods for both real networks, that were inferred from the hyperlink cascade and the MemeTracker cascade datasets, as described above. We set the time horizon to 
$T = 1.0$. Again, \maxinf outperforms all other methods typically by $\sim$$30$\%, by considering the temporal dynamics of the di\-ffusion.
Finally, we also compare \maxinf with the on-line bound from Th.~\ref{th:online-bound} for the real network that we inferred from the hyperlink cascade dataset in Fig.~\ref{fig:bound-real}. 
Similarly to the synthetic networks, the bound is not as tight as expected.

\section{Conclusions}
\label{sec:conclusions}
We have developed a method for influence maximization, \maxinf, that accounts for the temporal dynamics underlying diffusion processes. The 
method allows for va\-ria\-ble transmission (influence) rates between nodes of a network, as found in real-world scenarios.  Perhaps sur\-prising\-ly, 
for the rather general case of continuous temporal dynamics with variable transmission rates, we can eva\-luate the influence of any set of source nodes 
in a network ana\-ly\-ti\-cally using the work of~\citet{kulkarni1986shortest}. In this 
analytical framework, we find the near-optimal set of nodes that maximizes in\-fluence by exploiting the submodularity of our objective function. In addition, 
the reevaluation of influence for changes on the network is straightforward and the algorithm pa\-ra\-lle\-lizes naturally by sink and source nodes.

We evaluated our algorithm on a wide range of synthetic diffusion networks with heterogeneous temporal dynamics which aim to mimic the structure of 
real-world social and information networks. Our algorithm is remarkably stable across different network topologies. It outperforms state of the art methods 
in terms of influence (\ie, average number of infected nodes) for different network topologies, time horizons and source set sizes. \maxinf typically gives an 
influence gain of $\sim$$25$\% and it achieves the greatest improvement for small time horizons; in such scenarios, the temporal dynamics play a dramatic role.
We also evaluated \maxinf on two real diffusion networks that we inferred from the MemeTracker dataset using \netrate. Again, it drastically outperformed 
the state of the art by $\sim$$30$\%.

We believe that \maxinf provides a novel view of the in\-fluence maximization problem by accoun\-ting for the underlying temporal dynamics of diffusion 
networks.

\bibliographystyle{icml2012}
\bibliography{refs}

\end{document}